\newtheorem{theorem}{Theorem}
\newtheorem{proposition}{Proposition}
\newtheorem{observation}{Observation}
\newtheorem{lemma}{Lemma}
\newcommand{\mylabel}[2]{#2\def\@currentlabel{#2}\label{#1}}
\newcommand{\Nat}{{\rm I\kern-.23em N}}
\def\diamondplus{\mathrel{%
  \ooalign{\raise.21ex\hbox{$\tiny+$}\cr\hss$\Diamond$\hss}}}
\def\diamondminus{\mathrel{%
  \ooalign{\raise.21ex\hbox{$\tiny-$}\cr\hss$\Diamond$\hss}}}
\newcommand{\vadalog}{\textsc{Vadalog}\xspace}
\newcommand{\dlv}{\textsc{DLV}$^\exists$\xspace}
\newcommand{\shy}{\ensuremath{\mathsf{shy}}\xspace}
\newcommand{\warded}{\ensuremath{\mathsf{warded}}\xspace}
\newcommand{\protectedd}{\ensuremath{\mathsf{protected}}\xspace}
\newcommand{\chase}{\ensuremath{chase}}
\newcommand{\ochase}{\ensuremath{ochase}}
\newcommand{\ichase}{\ensuremath{ichase}}
\newcommand{\pchase}{\ensuremath{pchase}}
\newcommand{\pchaser}{\ensuremath{pchase_r}}
\title{On the Relationship between Shy and Warded Datalog+/-}
\author{%
Teodoro Baldazzi$^1$\and
Luigi Bellomarini$^2$\and
Marco Favorito$^{2,3}$\and
Emanuel Sallinger$^{4,5}$ \\
\affiliations
$^1$Universit\`a Roma Tre, Italy \\
$^2$Bank of Italy \\
$^3$Universit\`a Sapienza, Italy \\
$^4$TU Wien, Austria \\
$^5$University of Oxford, UK \\
\emails
\begin{large}
teodoro.baldazzi@uniroma3.it,
\{luigi.bellomarini,marco.favorito\}@bancaditalia.it,
sallinger@dbai.tuwien.ac.at
\end{large}
}
\begin{document}

\maketitle

\begin{abstract}
Datalog$^\exists$ is the extension of Datalog with existential quantification.
While its high expressive power, underpinned by a simple syntax and the support for full recursion, renders it particularly suitable for modern applications on knowledge graphs, query answering (QA) over such language is known to be undecidable in general.
For this reason, different fragments have emerged, introducing syntactic limitations to Datalog$^\exists$ that strike a balance between its expressive power and the computational complexity of QA, to achieve decidability.
In this short paper, we focus on two promising tractable candidates, namely Shy and Warded Datalog$^\pm$.
Reacting to an explicit interest from the community, we shed light on the relationship between these fragments.
Moreover, we carry out an experimental analysis of the systems implementing Shy and Warded, respectively DLV$^\exists$ and Vadalog.
\end{abstract}


\section{Introduction}
\label{introduction}
The last decade has witnessed a rising interest, both in academia and industry, towards querying and exploiting data in the form of \textit{knowledge graphs} (KGs), modeled by combining extensional knowledge with ontological theories to infer intensional information.
This led to the adoption of novel intelligent systems that perform \textit{ontological reasoning} and \textit{ontology-based query answering} (QA) tasks over KGs, employing powerful logic languages for knowledge representation~\cite{KrTh16}.

Among the main requirements such languages must exhibit, a high expressive power is essential in modern applications on KGs, so as to model and reason on complex domains with full recursion and existential quantification~\cite{bellomarini2018swift}.
At the same time, decidability and tractability of QA must be sustained, limiting the data complexity to a polynomial degree~\cite{gottlob2015beyond}.

In this context, Datalog$^\exists$, the natural extension of Datalog with existential quantification in rule heads, became particularly relevant.
Its semantics is specified in an operational way via the \textsc{chase}~\cite{MaMS79}, an algorithmic tool that takes as input a database $D$ and a set $\Sigma$ of rules, and modifies $D$ by adding new tuples until $\Sigma$ is satisfied.
While this language encompasses both a high expressiveness and a simple syntax that enable powerful knowledge-modeling, QA over it is known to be undecidable in general~\cite{cali2013taming}.
For this reason, recent years have witnessed a rise of proposals for decidable classes of Datalog$^\exists$ in the literature~\cite{cali2012general,cali2012towards,baget2010walking,baget2011walking}, defined by imposing proper syntactic limitations to strike a good balance between the expressive power of the language and the computational complexity of QA.

In this paper, we focus on two particularly promising languages, namely, \textit{Shy Datalog$^\exists$}~\cite{leone2019fast} and \textit{Warded Datalog$^\pm$}~\cite{gottlob2015beyond}, which have been independently introduced.
Indeed, they both cover the requirements for knowledge representation, restricting Datalog$^\exists$, though with notably different constraints, and featuring $\textsf{PTIME}$ data complexity for Boolean conjunctive QA (BCQA).
Shy and Warded are employed in state-of-the-art reasoning systems.
Specifically, Shy was introduced as part of the \textit{parsimonious} class~\cite{leone2012efficiently} and is adopted in the system \textit{DLV$^\exists$}~\cite{dlvE}.
Likewise, Warded was recently introduced as fragment of the Datalog$^\pm$ family~\cite{cali2010datalog+} and is implemented as logic core of the reasoner \textit{Vadalog system}~\cite{bellomarini2020vadalog}.
Both find many industrial applications in the financial, media intelligence, security, logistics, pricing domains, and more~\cite{BGPS19,AACC18}.

Previous works have thoroughly discussed how these two languages individually compare to the other main decidable classes~\cite{leone2019fast,gottlob2014datalog+}.
However, to our surprise, determining the relationship between Shy and Warded is still an unexplored research topic, despite known to be recurring in the Datalog$^\exists$ academic and practitioner communities.

This paper aims at providing an answer to such question, by contributing the results summarized below.
After a brief overview of Shy and Warded, we present a novel \textbf{theoretical analysis} of the relationship between the languages.
From a syntactical point of view, we conclude that they intersect in a newly defined fragment, named \textbf{Protected Datalog$^\pm$}.
Regarding semantics, we show that the \textsc{chase} procedures adopted by Shy and Warded are equivalent over Protected settings with respect to BCQA.
To enrich the analysis with a more empirical perspective, we then illustrate an \textbf{experimental comparison} between \dlv and the Vadalog system on QA tasks over Protected settings.

\smallskip \noindent\textbf{Overview.} This paper is organized as follows.
In Section~\ref{sec:background}, we provide an overview of Shy and Warded.
In Section~\ref{sec:relationship}, we discuss the relationship between them.
In Section~\ref{sec:experiments}, we provide the experimental evaluation of \dlv and the Vadalog system.
We draw our conclusions in Section~\ref{sec:conclusion}.


\section{Shy Datalog$^\exists$ and Warded Datalog$^\pm$}
\label{sec:background}
To guide our discussion, we briefly recall some relevant notions and provide an overview of the two fragments at issue.
Let $C$, $N$, and $V$ be disjoint countably infinite sets of \textit{constants}, (\textit{labelled}) \textit{nulls} and (\textit{regular}) \textit{variables}, respectively.
A (\textit{relational}) \textit{schema} $\mathbf{S}$ is a finite set of relation symbols (or \textit{predicates}) with associated arity.
A \textit{term} is either a constant or variable.
An \textit{atom} over $\mathbf{S}$ is an expression of the form $R(\bar v)$, where $R \in \mathbf{S}$ is of arity $n > 0$ and $\bar v$ is an $n$-tuple of terms. 
A \textit{database} (\textit{instance}) over $\mathbf{S}$ associates to each relation symbol in $\mathbf{S}$ a relation of the respective arity over the domain of constants and nulls. The members of the relations are called \textit{tuples} or \textit{facts}.
Given two conjunctions of atoms \textvarsigma$_1$ and \textvarsigma$_2$, we define a \textit{homomorphism} from \textvarsigma$_1$ to \textvarsigma$_2$ as a mapping $h: C \cup N \cup V \rightarrow C \cup N \cup V$ such that $h(t)=t$ if $t \in C$, $h(t) \in C \cup N$ if $t \in N$ and if $a(t_1, \ldots, t_n)$ is an atom $\in$ \textvarsigma$_1$, then $a(h(t_1), \ldots, h(t_n)) \in$ \textvarsigma$_2$: \textvarsigma$_1$ and \textvarsigma$_2$ are \textit{isomorphic} if $h^{-1}$ is a homomorphism from \textvarsigma$_2$ to \textvarsigma$_1$.

A Datalog$^\exists$ program consists of a set of facts and \emph{existential rules} $\forall \bar x \forall \bar y (\varphi(\bar x,\bar y)$$\to$$\exists \bar z~\psi(\bar x, \bar z))$, where $\varphi$ (the \textit{body}) and $\psi$ (the \textit{head}) are conjunctions of atoms.
We omit $\forall$ and denote conjunction by comma.
Let $\Sigma$ be a set of Datalog$^\exists$ rules and $p[i]$ a position (i.e., the $i$-th term of a predicate $p$ with arity $k$, where $i=1,\ldots,k$).
We define $p[i]$ as \textit{affected} if (i)~$p$ appears in a rule in $\Sigma$ with an existentially quantified variable (\textit{$\exists$-variable}) in $i$-th term  or, (ii)~there is a rule in $\Sigma$ such that a universally quantified variable (\textit{$\forall$-variable}) is only in affected body positions and in $p[i]$ in the head.

\smallskip\noindent\textbf{Shy Datalog$^\exists$.}
Let $y$ be an $\exists$-variable in $\Sigma$. We define the position $p[i]$ as \textit{invaded} by $y$ if there is a rule $\rho$ $\in$ $\Sigma$ such that \textit{head}($\rho$) $=$ $p$($t_1,\ldots,t_k$) and either (i)~$t_i=y$ or, (ii)~$t_i$ is a $\forall$-variable that occurs in \textit{body($\rho$)} only in positions that are invaded by $y$.
By such definition, if $p[i]$ is invaded, then it is affected, but not vice versa.
Let $x$ $\in$ $\textbf{X}$ be a variable in a conjunction of atoms \textvarsigma$_{[\textbf{X}]}$.
We say that $x$ is \textit{attacked} in \textvarsigma$_{[\textbf{X}]}$ by $y$ if $x$ occurs in \textvarsigma$_{[\textbf{X}]}$ only in positions invaded by $y$.
If $x$ is not attacked by any variable, $x$ is \textit{protected} in \textvarsigma$_{[\textbf{X}]}$.

We define a set $\Sigma$ as \textit{Shy Datalog$^\exists$} (or $\shy$) if, for each rule $\sigma$ $\in$ $\Sigma$: \mylabel{cond:s1}{(S1)}~if a variable $x$ occurs in more than one body atom, then $x$ is protected in \textit{body}($\sigma$); and, \mylabel{cond:s2}{(S2)}~if two distinct $\forall$-variables are not protected in \textit{body}($\sigma$) but occur both in \textit{head}($\sigma$) and in two different body atoms, then they are not attacked by the same variable~\cite{leone2019fast}.

\smallskip\noindent\textbf{Warded Datalog$^\pm$.}
A $\forall$-variable $x$ is \textit{harmful}, wrt a rule $\rho$ in $\Sigma$, if $x$ appears only in affected positions in $\rho$, otherwise it is \textit{harmless}.
A (join) rule that contains a harmful (join) variable is a \textit{harmful} (\textit{join}) \textit{rule}. If the harmful variable is in \textit{head($\rho$)}, it is \textit{dangerous}~\cite{gottlob2015beyond}.

We define a set $\Sigma$ as \textit{Warded Datalog$^\pm$} (or $\warded$) if, for each rule $\sigma$ $\in$ $\Sigma$: \mylabel{cond:w1}{(W1)}~all the dangerous variables appear in a single body atom, called \textit{ward}; and, \mylabel{cond:w2}{(W2)}~the ward only shares harmless variables with other atoms in the body.

\noindent\textbf{Chase and Semantics.}
Chase-based procedures enforce the satisfaction of $\Sigma$ over a database $D$ ($\big\langle D$,$\Sigma\big\rangle$), incrementally expanding $D$ into new instances $I$ with facts derived from the application of the rules in $\big\langle D$,$\Sigma\big\rangle$, until $\Sigma$ is satisfied ($\chase(D,\Sigma)$).
Such facts may contain labelled nulls as placeholders for the $\exists$-variables~\cite{cali2010datalog+}.
We refer to its \textit{oblivious} variant with $\ochase$.
Consider an instance $I^\prime \supseteq I$.
Given a rule $\sigma: \varphi(\bar x,\bar y)$$\to$$\exists \bar z~\psi(\bar x, \bar z))$ $\in$ $\Sigma$, a \textit{chase step} $\big\langle\sigma$,$h\big\rangle$ is \textit{applicable} to $I^\prime$ if there exists a homomorphism \textit{h} that maps the atoms of $\varphi(\bar x,\bar y)$ to facts of $I$ (i.e., $h$($\varphi(\bar x,\bar y)$) $\subseteq$ $I$).
When the chase step is applicable, the atom $h^\prime(\psi(\bar x, \bar z))$ is added to $I^\prime$, where $h^\prime$ is obtained by extending $h$ so that $h^\prime(z_i) \in N$ is a fresh labelled null, $\forall$ $z_i \in \bar z$.

However, in the presence of recursion, especially jointly with existential quantification, infinite labelled nulls could be generated in $\ochase$, causing the procedure not to terminate and inhibiting the decidability of the QA task~\cite{CaGL09}.
To practically achieve termination and decidability, $\shy$ and $\warded$ both employ variants of the $\ochase$, based on \textit{firing conditions} that limit the applicability of the chase steps.
Specifically, $\shy$ adopts the so-called \textit{parsimonious} chase ($\pchase$): the chase step $\big\langle\sigma$,$h\big\rangle$ is applicable wrt $I^\prime \supseteq I$ if, additionally, there is no homomorphism from \textit{h(head($\sigma$))} to $I^\prime$.
To cover decidability of CQ cases and preserve correctness of the evaluation, $\pchase$ is extended into its variant \textit{with resumption} ($\pchaser$)~\cite{leone2019fast}, which iteratively ``resumes'' the chase in the same state it was after termination but considering previous nulls as constants.
Similarly, $\warded$ employs an \textit{isomorphism-based} chase ($\ichase$): the chase step $\big\langle\sigma$,$h\big\rangle$ is applicable wrt $I^\prime \supseteq I$ if, additionally, there is no isomorphic embedding of \textit{h(head($\sigma$))} to $I^\prime$.
Here, decidability of CQA derives from~\cite[Theorem 1, Theorem 2]{bellomarini2018vadalog}, as we shall see.

A \textit{Boolean Conjunctive Query} (BCQ) is a first-order expression $q: \exists \textbf{Y}$~\textvarsigma$_{[\textbf{X}\cup\textbf{Y}]}$, where $\textbf{X} \in C$.
The answer of $q$ over an instance $I$ (namely, BCQA) is \textit{true}, denoted by $I\models q$, if and only if there is a homomorphism \textit{h}: $\textbf{Y}$ $\rightarrow$ $C \cup N$ s.t.\ \textit{h}(\textvarsigma$_{[\textbf{X}\cup\textbf{Y}]}$) $\subseteq$ $I$.
It holds that $q$ is true over $\chase(D,\Sigma)$, denoted by $\chase(D,\Sigma)\models q$, if and only if $\big\langle D$,$\Sigma\big\rangle\models q$.
We recall that the query output tuple problem (decision version of CQ evaluation) and BCQ evaluation are AC$_0$-reducible to each other~\cite{cali2012general}.
Therefore, we only consider BCQ evaluation.


\section{Relationship between Shy and Warded}
\label{sec:relationship}
We now have the means to discuss the relationship between $\shy$ and $\warded$.
Let us start by showing that the two fragments are not characterized by any form of syntactical containment (represented with the symbol $\not\subseteq$).

\begin{proposition}
\label{prop:shy-not-sub-warded}
$\shy \not\subseteq \warded$
\end{proposition}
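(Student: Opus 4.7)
The plan is to exhibit a concrete three-rule counterexample $\Sigma$ that satisfies the Shy conditions (S1) and (S2) but violates the Warded condition (W1), thereby proving strict non-containment. The conceptual lever is that Shy's notion of \emph{invasion} is indexed by the specific $\exists$-variable responsible for it, whereas Warded's notions of \emph{affectedness} and \emph{harmfulness} are unindexed. Consequently, (S2) tolerates two dangerous $\forall$-variables sitting in different body atoms provided they are attacked by distinct existentials, while (W1) categorically forbids any rule in which dangerous variables are scattered across more than one body atom.

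Concretely, I would take $\Sigma$ to consist of the rules
\[
p_1(x) \rightarrow \exists z_1\, a(z_1), \quad
p_2(x) \rightarrow \exists z_2\, b(z_2), \quad
a(u), b(v) \rightarrow c(u,v).
\]
The first two rules make $a[1]$ invaded by $z_1$ and $b[1]$ invaded by $z_2$, and hence both positions affected; since $p_1$ and $p_2$ never occur in any head, their argument positions are neither invaded nor affected. In the third rule, $u$ occurs only in the (single) body atom $a(u)$ and is attacked only by $z_1$, while $v$ occurs only in $b(v)$ and is attacked only by $z_2$. To verify Shy, (S1) is vacuous because $u$ and $v$ each occur in only one body atom, and (S2) holds because the two dangerous $\forall$-variables are attacked by \emph{different} existentials. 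To verify the failure of Warded on the third rule, observe that both body positions of $u$ and $v$ are affected, so both variables are harmful; since they also appear in the head, both are dangerous; and because they sit in two distinct body atoms, no single ward can contain them, violating (W1).

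The main obstacle is simply careful bookkeeping with the several interlocking definitions, in particular making sure that no unintended invasion propagates to $p_1[1]$ or $p_2[1]$ (which would pollute the set of attackers of $u$ and $v$ and risk merging them under a common existential) and that the affectedness analysis on the third rule proceeds as claimed. Once the definitions of invaded, attacked, affected, harmful, and dangerous are unfolded on the above $\Sigma$, the Shy side reduces to two vacuous checks plus a single disjointness observation on attackers, while the Warded side fails immediately by the atom-separation of the two dangerous variables.
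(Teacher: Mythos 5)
Your proposal is correct and takes essentially the same approach as the paper: both exhibit a three-rule program in which two existential rules invade positions with distinct existentials, and a final rule propagates the two resulting harmful (hence dangerous) variables to the head from two different body atoms, so that no ward exists while (S1) and (S2) still hold. The only cosmetic difference is that the paper's counterexample additionally includes a harmless/protected join variable shared between the two body atoms, whereas yours has no join at all, making (S1) vacuous rather than non-trivially satisfied.
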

\begin{proof}
We prove the claim by showing that there exists a program that is \shy but not \warded. Let $\Sigma$ be the following:
\begin{align*}
    \textit{E$_1$}(\text{x})\to\exists{\text{y}}~\textit{I$_1$}(\text{x},\text{y}) \tag{$\alpha$}\\
    \textit{E$_2$}(\text{x})\to\exists{\text{z}}~\textit{I$_2$}(\text{z},\text{x}) \tag{$\beta$}\\
    \textit{I$_1$}(\text{x},\text{y}),\textit{I$_2$}(\text{z},\text{x})\to\textit{I$_3$}(\text{x},\text{y},\text{z}) \tag{$\rho$}
\end{align*}

\noindent
Here, rules~$\alpha$ and~$\beta$ are existential rules, both $\shy$ and $\warded$, and introduce affectedness in positions $I_1[2]$ and $I_2[1]$, respectively. Rule~$\rho$ has a harmless join on $x$ and propagates the harmful variables $y$ and $z$ to the head. Indeed, $\rho$ is not $\warded$, as $y$ and $z$ are dangerous and there is no ward (condition~\ref{cond:w1} is not satisfied). However, it is $\shy$, as the join variable $x$ is protected (condition~\ref{cond:s1}), whereas $y$ and $z$ are attacked by distinct variables, respectively $y_\alpha$ and $z_\beta$ (condition~\ref{cond:s2}). Therefore $\Sigma$ is $\shy$ but it is not $\warded$.
\end{proof}

\begin{proposition}
\label{prop:warded-not-sub-shy}
$\warded \not\subseteq \shy$
\end{proposition}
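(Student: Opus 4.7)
The plan is to mirror Proposition~\ref{prop:shy-not-sub-warded} and exhibit a small program $\Sigma$ that is $\warded$ but violates one of the syntactic conditions of $\shy$. Condition~\ref{cond:s1} is the most natural target, since it constrains joins in the body and so can fail without forcing any variable into the head; in contrast, targeting~\ref{cond:s2} would typically require harmful variables already in the head, which tends to clash with~\ref{cond:w1}. The overall strategy is therefore to build a rule that joins on a $\forall$-variable occurring only in positions invaded by some $\exists$-variable, so that the variable is attacked (hence~\ref{cond:s1} fails) yet does not propagate to the head (so no dangerous variable arises).

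Concretely, I would try the two-rule program
\begin{align*}
    \textit{E}(\text{x})\to\exists{\text{y}}~\textit{I}(\text{y},\text{x}) \tag{$\alpha$}\\
    \textit{I}(\text{x},\text{y}),\textit{I}(\text{x},\text{z})\to\textit{J}(\text{y},\text{z}) \tag{$\rho$}
\end{align*}
Rule $\alpha$, as in Proposition~\ref{prop:shy-not-sub-warded}, is purely existential and serves to invade a position: it places the $\exists$-variable $y$ in $I[1]$, so $I[1]$ becomes invaded by $y$ (and hence affected). The companion position $I[2]$ is meant to remain non-affected, since the only $\forall$-variable landing there is $x$, which in the body of $\alpha$ occurs in $E[1]$, a non-affected position. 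Rule $\rho$ is then the intended witness: its join variable $x$ sits in two body atoms, both at the invaded slot $I[1]$.

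The verification then splits into two symmetric checks. First, I would argue $\rho \notin \shy$: since $x$ appears in two body atoms and only in the invaded position $I[1]$, it is attacked in \textit{body}($\rho$), so~\ref{cond:s1} fails. Second, I would argue $\rho \in \warded$: the join variable $x$ is harmful (it occurs only in affected positions) but does not appear in \textit{head}($\rho$), so it is not dangerous; meanwhile $y$ and $z$ sit in the non-affected position $I[2]$, making them harmless. Hence $\rho$ has no dangerous variables at all, and~\ref{cond:w1} and~\ref{cond:w2} hold vacuously. Rule $\alpha$ is evidently $\warded$, so $\Sigma$ witnesses $\warded \not\subseteq \shy$.

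The main obstacle will be keeping the bookkeeping around affectedness airtight: one must confirm that no iterated application of clause~(ii) in the definition of affectedness percolates to $I[2]$ or to positions of $J$, since that would make $y$ and $z$ harmful and could drag $\rho$ out of $\warded$. With the minimal $\Sigma$ above, no such propagation occurs, so I expect the example to suffice without introducing further rules or auxiliary predicates.
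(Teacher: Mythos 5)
Your proposal is correct and takes essentially the same approach as the paper: the paper's witness is $E_1(x)\to\exists y\, I_1(x,y)$ together with $I_1(x,y), I_1(z,y)\to I_2(x,z)$, i.e., one existential rule creating an invaded position plus a rule with an attacked harmful join whose head receives only harmless variables, so that \ref{cond:s1} fails while wardedness holds vacuously. Your program is the same example up to renaming and permuting argument positions, and your verification of affectedness, attackedness, and the absence of dangerous variables matches the paper's argument.
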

\begin{proof}
We prove the claim by showing that there exists a program that is $\warded$ but not $\shy$. Let $\Sigma$ be the following:
\begin{align*}
    \textit{E$_1$}(\text{x})\to\exists{\text{y}}~\textit{I$_1$}(\text{x},\text{y}) \tag{$\alpha$}\\
    \textit{I$_1$}(\text{x},\text{y}),\textit{I$_1$}(\text{z},\text{y})\to\textit{I$_2$}(\text{x},\text{z}) \tag{$\rho$}
\end{align*}
Here, rule~$\alpha$ is an existential rule, both $\warded$ and $\shy$, and introduces affectedness in position $I_1[2]$. Rule~$\rho$ contains a harmful join on $y$ and propagates the harmless variables $x$ and $z$ to the head. Indeed, $\rho$ is $\warded$, as no dangerous variables occur in the rule. However, it is not $\shy$, as the join variable $y$ is attacked by $y_\alpha$ (condition~\ref{cond:s1} is not satisfied). Therefore $\Sigma$ is $\warded$ but it is not $\shy$.
\end{proof}

\smallskip\noindent\textbf{Protected Datalog$^\pm$.}
To further explore the relationship between $\shy$ and $\warded$, we first introduce the notion of \textit{protected harmful} variable. Given a Datalog$^\exists$ set $\Sigma$, a $\forall$-variable $x$ is protected harmful, with respect to a rule $\rho$ $\in$ $\Sigma$, if it appears in affected positions in $\rho$ that are not invaded by the same $\exists$-variable: if the invading variable is the same, $x$ is an \textit{attacked harmful} variable. Without loss of generality (as more complex joins can be broken into steps~\cite{bellomarini2020vadalog}), we define \textit{protected harmful join rule} as a rule:
\begin{equation}
\label{eq:protected-harmful-join}
    A(x_1,y_1,h),B(x_2,y_2,h) \to \exists{z}~C(\overline{x},z) \tag{$\tau$}
\end{equation}
where $A$, $B$ and $C$ are atoms, $A[3]$ and $B[3]$ are positions invaded (and thus affected) by distinct $\exists$-variables, $x_1, x_2 \subseteq \overline{x}$, $y_1, y_2 \subseteq \overline{y}$ are disjoint tuples of harmless variables or constants and $h$ is a protected harmful variable.
By definition of protected variables and labelled nulls, the join on $h$ only activates on constant values in the \textsc{chase}. If $h$ is attacked, $\tau$ is an \textit{attacked harmful join rule}.

We define a set $\Sigma$ as \textit{Protected Datalog$^\pm$} (or $\protectedd$) if, for each rule $\sigma$ $\in$ $\Sigma$: \mylabel{cond:p1}{(P1)}~$\sigma$ does not contain attacked harmful joins; and, \mylabel{cond:p2}{(P2)}~$\sigma$ is $\warded$ (it satisfies~\ref{cond:w1} and~\ref{cond:w2}).
With reference to the relationship between $\shy$ and $\warded$, we first show that $\protectedd$ corresponds to the syntactical intersection of the two fragments (represented with $\cap$).

\begin{theorem}
\label{th:warded-and-shy-eq-protected-warded}
\warded $\cap$ \shy $=$ \protectedd.
\end{theorem}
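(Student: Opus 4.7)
The plan is to establish the equality by proving both inclusions. Since condition P2 is definitionally "$\sigma$ is \warded", one half of the Warded inclusion is tautological, and the actual content splits into two tasks: (a) every \protectedd\ program also satisfies the Shy conditions \ref{cond:s1} and \ref{cond:s2}, and (b) every program that is simultaneously \warded\ and \shy\ must satisfy \ref{cond:p1} (no attacked harmful joins).

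For direction (a), I would fix an arbitrary rule $\sigma$ in a \protectedd\ set and verify \ref{cond:s1} and \ref{cond:s2} in turn. For \ref{cond:s1}, take a variable $x$ occurring in more than one body atom and case-split: if $x$ is harmless then it occupies some non-affected position, which is never invaded, so no single $\exists$-variable can cover all its occurrences and $x$ is protected; if $x$ is harmful and occurs in the head then $x$ is dangerous, and \ref{cond:w1} together with \ref{cond:w2} force $x$ to reside in the unique ward and nowhere else in the body, contradicting the join assumption; the remaining possibility is a harmful join variable absent from the head, which by \ref{cond:p1} cannot be attacked harmful, so it must be a protected harmful variable whose shared positions are invaded by distinct existentials, whence $x$ is not attacked. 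For \ref{cond:s2}, I would argue its hypothesis is vacuous under wardedness: two attacked, head-occurring $\forall$-variables are both dangerous, and \ref{cond:w1} together with \ref{cond:w2} confine each of them to the ward as their unique body occurrence, so they cannot inhabit two different body atoms.

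For direction (b), I would reason by contrapositive. If $\Sigma$ violates \ref{cond:p1}, some rule contains an attacked harmful join, i.e., two distinct body atoms sharing an attacked harmful variable $h$ (using the decomposition of complex joins into binary steps to reduce to the canonical shape in the definition of rule $\tau$). But then $h$ occurs in more than one body atom while being attacked, hence not protected in the body, directly falsifying \ref{cond:s1}; so $\Sigma$ is not \shy.

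The most delicate step is the third sub-case of \ref{cond:s1} in direction (a): one must see that a harmful join variable not appearing in the head, in a \protectedd\ rule, is necessarily attacked by \emph{distinct} existentials and therefore fails the "same $y$ invading all positions" clause in the definition of \emph{attacked}. The other cases, including the vacuity argument for \ref{cond:s2} and the straightforward contradiction in direction (b), are essentially disciplined unpacking of the notions \emph{affected}, \emph{invaded}, \emph{attacked}, \emph{protected}, \emph{harmful}, \emph{dangerous}, and \emph{ward}.
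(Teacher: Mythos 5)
Your proposal is correct and follows essentially the same route as the paper: both directions are proved by unpacking the definitions, with $\lnot$\ref{cond:p1}$\implies\lnot$\ref{cond:s1} for one inclusion, and \ref{cond:p1}$\implies$\ref{cond:s1} together with the observation that wardedness (\ref{cond:w1}, \ref{cond:w2}) rules out the hypothesis of \ref{cond:s2} for the other. Your version is merely more fine-grained (explicit harmless/dangerous/protected-harmful case split for \ref{cond:s1}, and vacuity rather than contrapositive for \ref{cond:s2}), but the underlying argument is the paper's.
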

\begin{proof}
We prove the equivalence by showing the containment in both directions of implication.

$\warded \cap \shy \subseteq \protectedd$. Let $\Sigma$ be a generic set of rules $\in$ $\warded \cap \shy$. Then, $\Sigma$ satisfies condition~\ref{cond:p2} by definition. Regarding condition~\ref{cond:p1}, we proceed by contrapositive and show that $\lnot\ref{cond:p1}\implies\lnot\ref{cond:s1}$. Indeed, if $\Sigma$ does not satisfy condition~\ref{cond:p1}, then there exists a rule $\sigma$ $\in$ $\Sigma$ with an attacked harmful join. However, this means that $\sigma$ contains a variable that occurs in more than one body atom and it is not protected in \textit{body}($\sigma$) (condition $\ref{cond:s1}$ is not satisfied). Therefore, by contrapositive, $\Sigma$ $\in$ $\protectedd$.

$\protectedd \subseteq \warded \cap \shy$. Let $\Sigma$ be a generic set of rules $\in$ $\protectedd$. By condition~\ref{cond:p2}, $\Sigma$ $\in$ $\warded$. Also, by condition~\ref{cond:p1}, $\Sigma$ may only contain harmless joins and protected harmful joins, thus an attacked variable cannot occur in both body atoms by definition. Therefore, condition~\ref{cond:s1} is satisfied. Finally, we proceed by contrapositive and show that $\lnot\ref{cond:s2}\implies\lnot\ref{cond:p2}$. Indeed, if $\Sigma$ does not satisfy condition~\ref{cond:s2}, then there exists a rule $\sigma$ $\in$ $\Sigma$ with two dangerous attacked variables in distinct body atoms. However, this means that $\sigma$ does not contain a ward, thus it is not $\warded$ (condition $\ref{cond:p2}$ is not satisfied). Therefore, by contrapositive, $\Sigma$ $\in$ $\warded \cap \shy$. This concludes the proof.
\end{proof}

\noindent
Figure~\ref{fig:languages} illustrates the syntactic containment of these fragments, as well as their data complexity.

\smallskip
With reference to the semantic perspective of this analysis, let us first develop the following consideration.

\begin{observation}
$\pchase(D,\Sigma)\subseteq \ichase(D,\Sigma)$, $\forall\Sigma\in$ Datalog$^\exists$, $\forall D$. This derives from the definition of $\pchase$ and $\ichase$, as the applicability of their chase steps depends on fact homomorphism and fact isomorphism, respectively. In particular, whenever a $\pchase$ step $\langle \sigma,h \rangle$ is applicable with respect to $I^\prime \supseteq I$, the absence of homomorphisms from $h(\textit{head}(\sigma))$ to $I^\prime$ implies the absence of isomorphic embeddings of $h(\textit{head}(\sigma))$ to $I^\prime$ (and not vice versa). The $\ichase$ step is therefore applicable as well.
\end{observation}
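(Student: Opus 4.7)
The plan is to reduce the claim to a definitional comparison of the two firing conditions and then lift it to an inductive simulation argument. The central lemma is that every isomorphic embedding of a conjunction of atoms into an instance is, by definition, also a homomorphism; contrapositively, the non-existence of any homomorphism from $h(\textit{head}(\sigma))$ to a given instance $I^\prime$ rules out the existence of any isomorphic embedding into $I^\prime$. Thus the $\pchase$ firing condition, evaluated against $I^\prime$, is strictly stronger than the $\ichase$ firing condition evaluated against the same $I^\prime$.

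Building on this, I would proceed by induction on the length of a (possibly infinite, but fair) $\pchase$ derivation. Fix such a derivation $D = I_0, I_1, I_2, \ldots$ produced by successive applicable steps $\langle\sigma_k,h_k\rangle$. In parallel, I would construct an $\ichase$ derivation $D = J_0, J_1, J_2, \ldots$ in which at each stage the identical step $\langle\sigma_k,h_k\rangle$---with the same choice of fresh null label---is applied, ensuring $J_k = I_k$ throughout. The inductive step is then immediate from the lemma: if $\langle\sigma_k,h_k\rangle$ is $\pchase$-applicable wrt $I_k$, then by the contrapositive it is also $\ichase$-applicable wrt $J_k = I_k$, and both applications produce the same successor instance.

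The delicate points I would address with care are: (\emph{i})~the $\ichase$ chooses fresh null labels independently of $\pchase$, so I would coordinate the two procedures to use matching labels, which is harmless because null names are purely symbolic; and (\emph{ii})~after exhausting the simulated prefix, the $\ichase$ may still fire further steps that $\pchase$ could not, extending its derivation strictly beyond what $\pchase$ produces---but this is exactly what yields the (possibly strict) inclusion $\pchase(D,\Sigma) \subseteq \ichase(D,\Sigma)$. Passing to the union over all stages, i.e.\ to the limit instance of each chase, delivers the claim. The main obstacle, to the extent there is one, is simply bookkeeping around null labels; the semantic content of the argument is entirely captured by the one-line observation that isomorphic embeddings are a sub-class of homomorphisms.
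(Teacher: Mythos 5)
Your proposal is correct and rests on exactly the same key point as the paper's own justification: the $\pchase$ firing condition (no homomorphism from $h(\textit{head}(\sigma))$ to $I^\prime$) is stronger than the $\ichase$ firing condition (no isomorphic embedding), since every isomorphic embedding is in particular a homomorphism. The paper states only this one-line comparison of firing conditions; your explicit inductive simulation of the $\pchase$ derivation by an $\ichase$ derivation with coordinated null labels merely spells out the lifting that the paper leaves implicit, so it is the same approach in more detail.
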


\noindent
Since, by Theorem~\ref{th:warded-and-shy-eq-protected-warded}, $\protectedd$ is a syntactical subset of $\warded$, its data complexity is also $\textsf{PTIME}$, as shown in Figure~\ref{fig:languages}.
Moreover, we observe that reasoning with $\protectedd$ can adopt both $\pchaser$ and $\ichase$.
Based on these notions, we make an additional step in the comparative analysis of the fragments, stating that $\pchaser$ and $\ichase$ are equivalent over $\protectedd$ settings, with respect to BCQA (i.e., a generic BCQ has the same answer).

\begin{lemma}
\label{lem:ochase-eq-pchaser-shy}
Let $\Sigma\in\shy$, $D$ a database and $q$ a BCQ. Then, $\ochase(D,\Sigma)\models q$ if and only if $\pchaser(D,\Sigma)\models q$.
\end{lemma}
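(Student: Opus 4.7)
The plan is to establish the biconditional by proving each direction separately. The ``if'' direction will follow from a chase-containment argument, while the ``only if'' direction will rely on the structural restrictions imposed by $\shy$.

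For the ``if'' direction, suppose $\pchaser(D,\Sigma)\models q$. The plan is to build, by induction on the length of the derivation, a null-renaming homomorphism $\mu: \pchaser(D,\Sigma)\to\ochase(D,\Sigma)$ that is the identity on constants: every parsimonious step fires a rule under strictly more restrictive conditions than the corresponding oblivious step (it additionally forbids a head homomorphism into the current instance, with previous nulls treated as constants only within a resumption round), so each pchaser-derived fact has an ochase-derived counterpart that differs only in the names of its fresh nulls. Composing $\mu$ with the homomorphism witnessing $\pchaser(D,\Sigma)\models q$ immediately yields $\ochase(D,\Sigma)\models q$.

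For the ``only if'' direction, assume $\ochase(D,\Sigma)\models q$ via a homomorphism $h$. The plan is to construct a homomorphism $h': q \to \pchaser(D,\Sigma)$ by a folding argument driven by conditions~\ref{cond:s1} and~\ref{cond:s2}. Condition~\ref{cond:s1} forces body joins onto protected terms, so parsimonious pruning never collapses two facts that a genuine join in $\ochase$ distinguishes; condition~\ref{cond:s2} prevents two distinct dangerous variables in a rule head from being attacked by the same null, so all the distinct nulls required by $h$ can be separated across resumption rounds. I would proceed by induction on the depth of the ochase derivation producing the image of $h$: at each step, either the relevant fact is already present in some intermediate $\pchaser_k$, or the next resumption round produces an isomorphic copy that can be matched by $h'$, with the total number of resumption rounds bounded by the number of variables in $q$.

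The main obstacle is establishing precisely that a query-bounded number of resumptions suffices and that the parsimonious pruning never eliminates a fact whose isomorphic counterpart is required by $h$. This is the technical core of the soundness and completeness of $\pchaser$ for $\shy$ programs from Leone et al.~2019, which can be invoked directly to close the induction and obtain $h'$.
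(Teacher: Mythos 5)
Your proposal ultimately closes the argument (in particular the hard, ``only if'' direction) by invoking the soundness and completeness of $\pchaser$ for $\shy$ programs from Leone et al.\ 2019, which is exactly what the paper does: its entire proof is a one-line appeal to \cite[Theorem 4.9]{leone2019fast}. The extra scaffolding you sketch is a plausible reconstruction of that theorem's internals, but the approach is essentially the same.
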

\begin{proof}
The result directly follows from~\cite[Theorem 4.9]{leone2019fast} for BCQA decidability over \shy.
\end{proof}

\begin{lemma}
\label{lem:ochase-eq-ichase-warded}
Let $\Sigma\in\warded$, $D$ a database and $q$ a BCQ. Then, $\ochase(D,\Sigma)\models q$ if and only if $\ichase(D,\Sigma)\models q$.
\end{lemma}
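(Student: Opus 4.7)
My plan is to establish the equivalence by showing that both $\ochase(D,\Sigma)$ and $\ichase(D,\Sigma)$ are universal models of $\langle D,\Sigma\rangle$ for $\Sigma\in\warded$, and are therefore homomorphically equivalent, from which preservation of BCQ answers is immediate.

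The $(\Leftarrow)$ direction is essentially by construction. Every $\ichase$ step $\langle\sigma,h\rangle$ is also an applicable $\ochase$ step, since the $\ochase$ imposes no firing condition. As already noted in the preceding observation, the extra firing condition of the $\ichase$ (absence of an isomorphic embedding) is stricter than the one needed for $\ochase$, so up to renaming of labelled nulls one has $\ichase(D,\Sigma)\subseteq\ochase(D,\Sigma)$. Hence any homomorphism $h:q\to\ichase(D,\Sigma)$ witnessing $\ichase(D,\Sigma)\models q$ composes with the inclusion to witness $\ochase(D,\Sigma)\models q$.

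For $(\Rightarrow)$, I would invoke \cite[Theorem 1, Theorem 2]{bellomarini2018vadalog}, which ensure termination and universality of the isomorphism-based chase for $\warded$ programs: when $\Sigma\in\warded$, $\ichase(D,\Sigma)$ is a universal model of $\langle D,\Sigma\rangle$. The $\ochase(D,\Sigma)$ is always a universal model (as a possibly infinite limit of its finite stages). Universal models are homomorphically equivalent, so there is a homomorphism $g:\ochase(D,\Sigma)\to\ichase(D,\Sigma)$ that is the identity on constants. Given any homomorphism $h:q\to\ochase(D,\Sigma)$ witnessing $\ochase(D,\Sigma)\models q$, the composition $g\circ h$ is a homomorphism from $q$ into $\ichase(D,\Sigma)$, yielding $\ichase(D,\Sigma)\models q$.

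The main obstacle is justifying universality of $\ochase(D,\Sigma)$ when it does not terminate, and aligning the two chase procedures step-by-step. Concretely, I would construct $g$ inductively: at each $\ochase$ step that the $\ichase$ skips, the freshly generated null in the $\ochase$ is mapped by $g$ to the witness null of the already-present isomorphic embedding in the $\ichase$, while $\ichase$ steps are matched to the corresponding $\ochase$ step by the identity. By compactness of BCQ entailment—the witnessing homomorphism for $q$ touches only finitely many atoms of $\ochase(D,\Sigma)$—only a finite prefix of this construction is needed, so the possible non-termination of the $\ochase$ causes no difficulty. This step-by-step alignment is precisely what the cited theorems from \cite{bellomarini2018vadalog} package, which is why the proof ultimately reduces to their invocation.
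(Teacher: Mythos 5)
Your proposal is correct and follows essentially the same route as the paper: the $(\Leftarrow)$ direction via the containment $\ichase(D,\Sigma)\subseteq\ochase(D,\Sigma)$, and the $(\Rightarrow)$ direction by invoking \cite[Theorem 1, Theorem 2]{bellomarini2018vadalog}. Your unpacking of the forward direction in terms of universal models, homomorphic equivalence, and compactness of BCQ entailment is a more explicit rendering of what the paper compresses into the remark that chase subgraphs derived from isomorphic facts are isomorphic and hence irrelevant for BCQA.
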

\begin{proof}
Completeness, namely $\ochase(D,\Sigma)\models q$ implies $\ichase(D,\Sigma)\models q$, directly follows from~\cite[Theorem 1, Theorem 2]{bellomarini2018vadalog}, as chase subgraphs derived from isomorphic facts are isomorphic, thus irrelevant for BCQA. Soundness, namely $\ichase(D,\Sigma)\models q$ implies $\ochase(D,\Sigma)\models q$, holds because $\ichase(\Sigma)\subseteq\ochase(\Sigma)$ by definition. 
\end{proof}

\begin{theorem}
\label{th:pchaser-ichase-bcq-eq-protected}
Let $\Sigma\in\protectedd$, $D$ a database and $q$ a BCQ. Then, $\pchaser(D,\Sigma)\models q$ if and only if $\ichase(D,\Sigma)\models q$.
\end{theorem}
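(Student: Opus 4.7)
The plan is to use Theorem~\ref{th:warded-and-shy-eq-protected-warded} as a pivot: since $\protectedd$ coincides syntactically with $\warded \cap \shy$, any $\Sigma \in \protectedd$ belongs simultaneously to both fragments, so both Lemma~\ref{lem:ochase-eq-pchaser-shy} and Lemma~\ref{lem:ochase-eq-ichase-warded} are applicable to the same $\Sigma$. I would then chain the two chase-equivalences through the common oblivious chase $\ochase(D,\Sigma)$, which serves as a semantic bridge between $\pchaser$ and $\ichase$.

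Concretely, I would first invoke Theorem~\ref{th:warded-and-shy-eq-protected-warded} to conclude $\Sigma \in \shy$ and $\Sigma \in \warded$. Applying Lemma~\ref{lem:ochase-eq-pchaser-shy} to $\Sigma$ as a \shy program yields $\pchaser(D,\Sigma)\models q \iff \ochase(D,\Sigma)\models q$. Applying Lemma~\ref{lem:ochase-eq-ichase-warded} to $\Sigma$ as a \warded program yields $\ochase(D,\Sigma)\models q \iff \ichase(D,\Sigma)\models q$. Transitively composing the two biconditionals delivers the desired equivalence $\pchaser(D,\Sigma)\models q \iff \ichase(D,\Sigma)\models q$, for every database $D$ and every BCQ $q$.

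Since both lemmas are previously established results whose hypotheses are \emph{exactly} satisfied under the assumption $\Sigma \in \protectedd$, there is no substantial technical obstacle: the theorem reduces to a corollary of Theorem~\ref{th:warded-and-shy-eq-protected-warded} together with the two lemmas. The only subtlety worth double-checking is that Lemmas~\ref{lem:ochase-eq-pchaser-shy} and~\ref{lem:ochase-eq-ichase-warded} quantify the equivalence uniformly over arbitrary $D$ and $q$ and refer to the same semantic notion of $\models$ (homomorphism-based BCQ evaluation over the respective chase instances); both do, so the composition is legitimate and the argument is immediate.
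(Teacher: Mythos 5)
Your proposal is correct and follows essentially the same route as the paper: invoke Theorem~\ref{th:warded-and-shy-eq-protected-warded} to place $\Sigma$ in both $\shy$ and $\warded$, then chain Lemma~\ref{lem:ochase-eq-pchaser-shy} and Lemma~\ref{lem:ochase-eq-ichase-warded} through $\ochase(D,\Sigma)$ as the common pivot. The paper's proof is just a terser statement of exactly this composition.
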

\begin{proof}
By Theorem~\ref{th:warded-and-shy-eq-protected-warded} and by definition of $\protectedd$ fragment, we know that $\Sigma\in\shy$ and $\Sigma\in\warded$. Therefore, the result directly follows from Lemma~\ref{lem:ochase-eq-pchaser-shy} and Lemma~\ref{lem:ochase-eq-ichase-warded}.
\end{proof}

\begin{figure}[hbt!]
  \centering
  \includegraphics[width=0.20\textwidth]{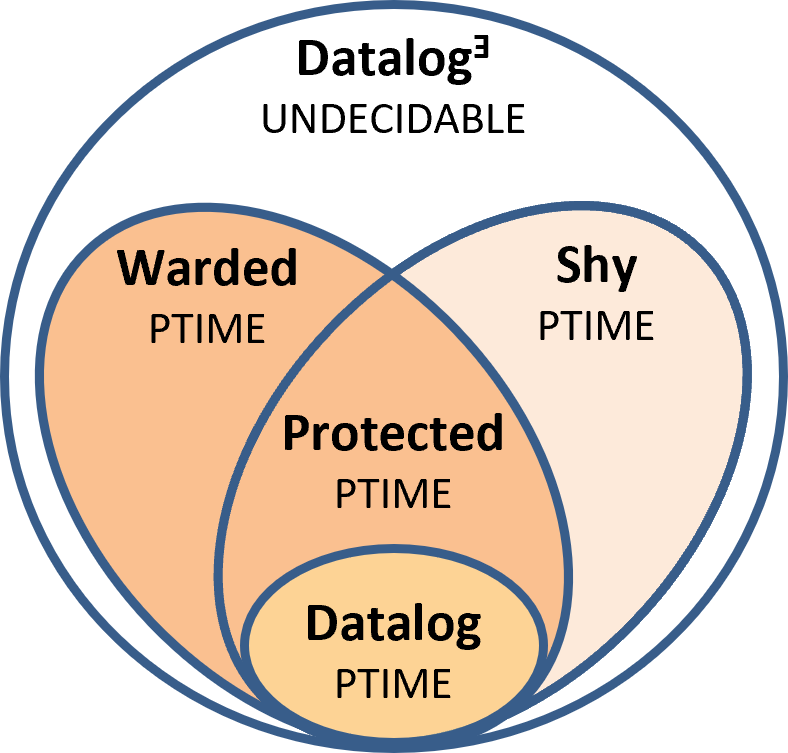}
  \caption{Syntactic containment of fragments and data complexity.}
  \label{fig:languages}
\end{figure}


\section{DLV$^\exists$ and Vadalog System}
\label{sec:experiments}
We integrate the analysis of $\shy$ and $\warded$ with an experimental comparison between their main state-of-the-art implementations, namely \dlv and the Vadalog system ($\vadalog$). Indeed, it follows from Section~\ref{sec:relationship} that the two systems are comparable over $\protectedd$ settings.

\smallskip\noindent\textbf{Systems Overview.}
\dlv is an extension of the answer set programming system DLV~\cite{leone2006dlv}, enriched with $\pchaser$ for CQA over $\shy$ programs.
To answer CQs, it employs a \textit{materialization} approach, producing and storing all the facts for each predicate via the so-called \textit{semi-naive} evaluation~\cite{abiteboul1995foundations}, where rules are evaluated according to a bottom-up strategy from the initial database.
It is available online~\cite{dlvE}.
$\vadalog$ is a well-known system for KG management, implementing $\warded$ and $\ichase$ for reasoning and CQA~\cite{bellomarini2018vadalog}.
To answer CQs, it employs a \textit{streaming} approach, building a \textit{reasoning query graph} as a processing pipeline, where nodes correspond to algebra operators that perform transformations over the data pulled from their predecessors, and edges are dependency connections between the rules.
It is available upon request.
While \dlv integrates powerful optimization techniques that \vadalog has yet to incorporate, the latter is also extended with multiple features of practical utility, such as aggregations and equality-generating dependencies.

\smallskip\noindent\textbf{Experiments and Results.}
We compared \dlv and $\vadalog$ over distinct reasoning scenarios and QA tasks. The experiments were run on a local installation of the two systems, using a machine equipped with an Intel Core i7-8665U CPU running at 1.90 GHz and 16 GB of RAM. The results of the experiments, as well as the steps to reproduce them on \dlv, were made public~\cite{experiments}.

The first set of experiments is based on a financial recursive scenario about persons and companies~\cite{bellomarini2018vadalog} and real datasets extracted from DBPedia~\cite{dbpedia}. A \textit{person of significant control} (PSC) for a company is a person that directly or indirectly has some control over the company. The goal of this task is finding all the PSCs for the companies in DBPedia. We ran it for all the 67K available companies and for 1K, 10K, 100K, 500K and 1M of the available persons. Figure~\ref{fig:psc} illustrates similar execution times for the two systems, all under 5 seconds. Specifically, \dlv has better times in the first cases, partially due to $\vadalog$'s longer pre-processing phase for the creation of the query graph. With larger datasets, $\vadalog$'s performance progressively improves, thanks to its efficient recursion control to avoid the exploration of redundant areas of the reasoning space, and its \textit{routing strategies} to traverse the query graph~\cite{bellomarini2020vadalog}.

\begin{figure}[hbt!]
  \centering
  \includegraphics[width=0.32\textwidth]{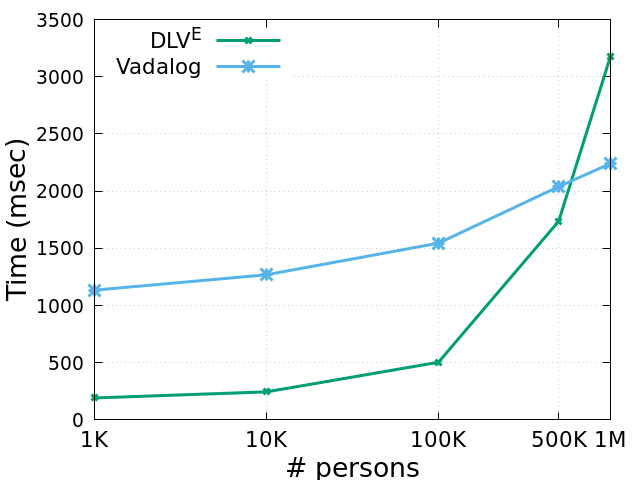}
  \caption{QA times for DBPedia PSC scenarios.}
  \label{fig:psc}
\end{figure}

The second set of experiments is based on \textit{Doctors}, a data integration task from the schema mapping literature~\cite{mecca2014iq}, included in the \textsc{ChaseBench} benchmark~\cite{benedikt2017benchmarking}. It represents a plausible real-world case related to the healthcare domain and it features existential quantification. We ran it for 10K, 100K, 500K and 1M of all the datasets and over 7 distinct queries, of which we report the average times.
Figure~\ref{fig:doctors} illustrates that, while both systems show very good behaviour even in the most demanding cases, \dlv outperforms $\vadalog$. This is motivated by the powerful optimization techniques integrated in \dlv that limit the loading of redundant data for the query and reduce the space needed for materializing the output of $\pchaser$~\cite{leone2019fast}.

\begin{figure}[hbt!]
  \centering
  \includegraphics[width=0.32\textwidth]{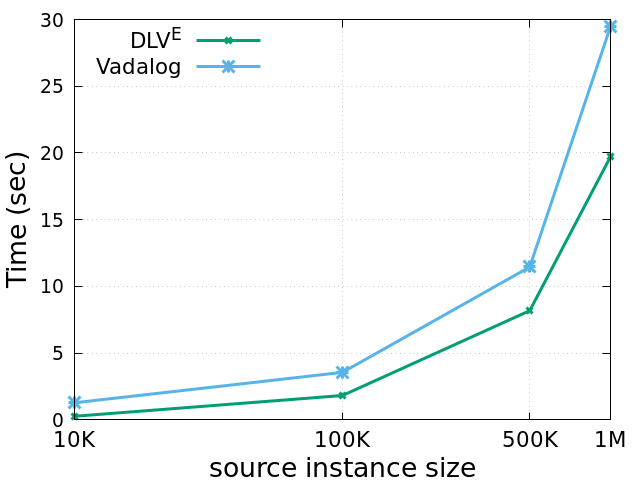}
  \caption{QA times for \textsc{ChaseBench} Doctors scenarios.}
  \label{fig:doctors}
\end{figure}


\section{Conclusion}
\label{sec:conclusion}
Shy Datalog$^\exists$ and Warded Datalog$^\pm$ are two relevant languages that extend Datalog with existential quantification while sustaining decidability of BCQA. Reacting to an explicit interest of the community, we provided an analysis of the fragments in terms of syntactical relationship and query evaluation, as well as an experimental comparison of their main implementations. Future work includes investigating their mutual reduction into the intersection fragment we defined and its impact in terms of their semantic relationship.


\newpage

\section*{Acknowledgements}
The work on this paper was supported by the Vienna Science and Technology Fund (WWTF) grant VRG18-013


\bibliographystyle{kr}
\typeout{}
\bibliography{biblio}

\begin{thebibliography}{}

\bibitem[\protect\citeauthoryear{Abiteboul, Hull, and
  Vianu}{1995}]{abiteboul1995foundations}
Abiteboul, S.; Hull, R.; and Vianu, V.
\newblock 1995.
\newblock {\em Foundations of databases}, volume~8.
\newblock Addison-Wesley Reading.

\bibitem[\protect\citeauthoryear{Adrian \bgroup et al\mbox.\egroup
  }{2018}]{AACC18}
Adrian, W.~T.; Alviano, M.; Calimeri, F.; Cuteri, B.; Dodaro, C.; Faber, W.;
  Fusc{\`{a}}, D.; Leone, N.; Manna, M.; Perri, S.; Ricca, F.; Veltri, P.; and
  Zangari, J.
\newblock 2018.
\newblock The {ASP} system {DLV:} advancements and applications.
\newblock {\em K{\"{u}}nstliche Intell.} 32(2-3):177--179.

\bibitem[\protect\citeauthoryear{Baget \bgroup et al\mbox.\egroup
  }{2011}]{baget2011walking}
Baget, J.-F.; Mugnier, M.-L.; Rudolph, S.; and Thomazo, M.
\newblock 2011.
\newblock Walking the complexity lines for generalized guarded existential
  rules.
\newblock In {\em Twenty-Second International Joint Conference on Artificial
  Intelligence}.

\bibitem[\protect\citeauthoryear{Baget, Lecl{\`e}re, and
  Mugnier}{2010}]{baget2010walking}
Baget, J.-F.; Lecl{\`e}re, M.; and Mugnier, M.-L.
\newblock 2010.
\newblock Walking the decidability line for rules with existential variables.
\newblock {\em KR} 10:466--476.

\bibitem[\protect\citeauthoryear{Baldazzi \bgroup et al\mbox.\egroup
  }{2022}]{experiments}
Baldazzi, T.; Bellomarini, L.; Favorito, M.; and Sallinger, E.
\newblock 2022.
\newblock Experimental evaluation.
\newblock \url{https://github.com/TeodoroBaldazzi/Shy-Warded-Datalog-KR2022}.
\newblock [Online; accessed 11-Feb-2022].

\bibitem[\protect\citeauthoryear{Bellomarini \bgroup et al\mbox.\egroup
  }{2017}]{bellomarini2018swift}
Bellomarini, L.; Gottlob, G.; Pieris, A.; and Sallinger, E.
\newblock 2017.
\newblock Swift logic for big data and knowledge graphs.
\newblock In {\em IJCAI},  2--10.
\newblock Springer.

\bibitem[\protect\citeauthoryear{Bellomarini \bgroup et al\mbox.\egroup
  }{2020}]{bellomarini2020vadalog}
Bellomarini, L.; Benedetto, D.; Gottlob, G.; and Sallinger, E.
\newblock 2020.
\newblock Vadalog: A modern architecture for automated reasoning with large
  knowledge graphs.
\newblock {\em IS}.

\bibitem[\protect\citeauthoryear{Bellomarini, Sallinger, and
  Gottlob}{2018}]{bellomarini2018vadalog}
Bellomarini, L.; Sallinger, E.; and Gottlob, G.
\newblock 2018.
\newblock The {V}adalog {S}ystem: Datalog-based reasoning for knowledge graphs.
\newblock {\em VLDB} 11(9).

\bibitem[\protect\citeauthoryear{Benedikt \bgroup et al\mbox.\egroup
  }{2017}]{benedikt2017benchmarking}
Benedikt, M.; Konstantinidis, G.; Mecca, G.; Motik, B.; Papotti, P.; Santoro,
  D.; and Tsamoura, E.
\newblock 2017.
\newblock Benchmarking the chase.
\newblock In {\em Proceedings of the 36th ACM SIGMOD-SIGACT-SIGAI Symposium on
  Principles of Database Systems},  37--52.

\bibitem[\protect\citeauthoryear{Berger \bgroup et al\mbox.\egroup
  }{2019}]{BGPS19}
Berger, G.; Gottlob, G.; Pieris, A.; and Sallinger, E.
\newblock 2019.
\newblock The space-efficient core of vadalog.
\newblock In {\em {PODS}},  270--284.
\newblock {ACM}.

\bibitem[\protect\citeauthoryear{Cal{\`\i} \bgroup et al\mbox.\egroup
  }{2010}]{cali2010datalog+}
Cal{\`\i}, A.; Gottlob, G.; Lukasiewicz, T.; Marnette, B.; and Pieris, A.
\newblock 2010.
\newblock Datalog+/-: A family of logical knowledge representation and query
  languages for new applications.
\newblock In {\em 2010 25th Annual IEEE LICS},  228--242.
\newblock IEEE.

\bibitem[\protect\citeauthoryear{Cal{\`\i}, Gottlob, and
  Kifer}{2013}]{cali2013taming}
Cal{\`\i}, A.; Gottlob, G.; and Kifer, M.
\newblock 2013.
\newblock Taming the infinite chase: Query answering under expressive
  relational constraints.
\newblock {\em Journal of Artificial Intelligence Research} 48:115--174.

\bibitem[\protect\citeauthoryear{Cal{\`{\i}}, Gottlob, and
  Lukasiewicz}{2009}]{CaGL09}
Cal{\`{\i}}, A.; Gottlob, G.; and Lukasiewicz, T.
\newblock 2009.
\newblock A general datalog-based framework for tractable query answering over
  ontologies.
\newblock In {\em PODS},  77--86.

\bibitem[\protect\citeauthoryear{Cal{\`\i}, Gottlob, and
  Lukasiewicz}{2012}]{cali2012general}
Cal{\`\i}, A.; Gottlob, G.; and Lukasiewicz, T.
\newblock 2012.
\newblock A general datalog-based framework for tractable query answering over
  ontologies.
\newblock {\em Journal of Web Semantics} 14:57--83.

\bibitem[\protect\citeauthoryear{Cal{\`\i}, Gottlob, and
  Pieris}{2012}]{cali2012towards}
Cal{\`\i}, A.; Gottlob, G.; and Pieris, A.
\newblock 2012.
\newblock Towards more expressive ontology languages: The query answering
  problem.
\newblock {\em Artificial Intelligence} 193:87--128.

\bibitem[\protect\citeauthoryear{DBpedia}{2018}]{dbpedia}
DBpedia.
\newblock 2018.
\newblock Web site.
\newblock
  \url{http://wiki.dbpedia.org/services-resources/downloads/dbpedia-tables}.

\bibitem[\protect\citeauthoryear{Gottlob and Pieris}{2015}]{gottlob2015beyond}
Gottlob, G., and Pieris, A.
\newblock 2015.
\newblock Beyond sparql under owl 2 ql entailment regime: Rules to the rescue.
\newblock In {\em Twenty-Fourth International Joint Conference on Artificial
  Intelligence}.

\bibitem[\protect\citeauthoryear{Gottlob, Lukasiewicz, and
  Pieris}{2014}]{gottlob2014datalog+}
Gottlob, G.; Lukasiewicz, T.; and Pieris, A.
\newblock 2014.
\newblock Datalog+/-: Questions and answers.
\newblock In {\em Fourteenth International Conference on the Principles of
  Knowledge Representation and Reasoning}.

\bibitem[\protect\citeauthoryear{Kr{\"{o}}tzsch and Thost}{2016}]{KrTh16}
Kr{\"{o}}tzsch, M., and Thost, V.
\newblock 2016.
\newblock Ontologies for knowledge graphs: Breaking the rules.
\newblock In {\em International Semantic Web Conference {(1)}}, volume 9981 of
  {\em LNCS},  376--392.

\bibitem[\protect\citeauthoryear{Leone \bgroup et al\mbox.\egroup
  }{2006}]{leone2006dlv}
Leone, N.; Pfeifer, G.; Faber, W.; Eiter, T.; Gottlob, G.; Perri, S.; and
  Scarcello, F.
\newblock 2006.
\newblock The dlv system for knowledge representation and reasoning.
\newblock {\em ACM Transactions on Computational Logic (TOCL)} 7(3):499--562.

\bibitem[\protect\citeauthoryear{Leone \bgroup et al\mbox.\egroup
  }{2012}]{leone2012efficiently}
Leone, N.; Manna, M.; Terracina, G.; and Veltri, P.
\newblock 2012.
\newblock Efficiently computable datalog\^{}{E} programs.
\newblock In {\em Thirteenth international conference on the principles of
  knowledge representation and reasoning}.

\bibitem[\protect\citeauthoryear{Leone \bgroup et al\mbox.\egroup
  }{2017}]{dlvE}
Leone, N.; Manna, M.; Terracina, G.; and Veltri, P.
\newblock 2017.
\newblock Dlv\^{}{E} system.
\newblock \url{https://www.mat.unical.it/dlve/}.
\newblock [Online; accessed 01-Feb-2022].

\bibitem[\protect\citeauthoryear{Leone \bgroup et al\mbox.\egroup
  }{2019}]{leone2019fast}
Leone, N.; Manna, M.; Terracina, G.; and Veltri, P.
\newblock 2019.
\newblock Fast query answering over existential rules.
\newblock {\em ACM Transactions on Computational Logic (ToCL)} 20(2):1--48.

\bibitem[\protect\citeauthoryear{Maier, Mendelzon, and Sagiv}{1979}]{MaMS79}
Maier, D.; Mendelzon, A.~O.; and Sagiv, Y.
\newblock 1979.
\newblock Testing implications of data dependencies.
\newblock {\em ACM Transactions on Database Systems} 4(4):455--468.

\bibitem[\protect\citeauthoryear{Mecca, Papotti, and
  Santoro}{2014}]{mecca2014iq}
Mecca, G.; Papotti, P.; and Santoro, D.
\newblock 2014.
\newblock Iq-meter-an evaluation tool for data-transformation systems.
\newblock In {\em 2014 IEEE 30th International Conference on Data Engineering},
   1218--1221.
\newblock IEEE.

\end{thebibliography}

\end{document}